\newtheorem{lemma}{Lemma}
\newtheorem{theorem}{Theorem}
\DeclareMathOperator{\sign}{sign}
\title{\vspace*{-10mm}A Simple Bias Reduction for Chatterjee's Correlation}
\author{
Christoph Dalitz, Juliane Arning, Steffen Goebbels\\
Hochschule Niederrhein\\
Institut f\"ur Mustererkennung\\
Reinarzstr. 49, 47805 Krefeld\\
{\tt christoph.dalitz{@}hsnr.de}
}
\date{}
\begin{document}

\renewcommand{\labelenumi}{\arabic{enumi})}

\twocolumn[
  \begin{@twocolumnfalse}
    \maketitle
\begin{abstract}
Chatterjee's rank correlation coefficient $\xi_n$ is an empirical index for detecting functional dependencies between two variables $X$ and $Y$. It is an estimator for a theoretical quantity $\xi$ that is zero for independence and one if $Y$ is a measurable function of $X$. Based on an equivalent characterization of sorted numbers, we derive an upper bound for $\xi_n$ and suggest a simple normalization aimed at reducing its bias for small sample size $n$. In Monte Carlo simulations of various models, the normalization reduced the bias in all cases. The mean squared error was reduced, too, for values of $\xi$ greater than about 0.4. Moreover, we observed that non-parametric confidence intervals for $\xi$ based on bootstrapping $\xi_n$ in the usual n-out-of-n way have a coverage probability close to zero. This is remedied by an m-out-of-n bootstrap without replacement in combination with our normalization method.
\end{abstract}
\vspace*{2ex}

  \end{@twocolumnfalse}
  ]

\thispagestyle{firstpage}

\section{Introduction}
\label{sec:intro}
The common correlation coefficients by Pearson, Spearman, or Kendall can only be used for identifying simple relationships between two random variables $X$ and $Y$. Pearson's correlation coefficient approaches one only for a linear relationship, whereas Spearman's $\rho$ and Kendall's $\tau$ are based on ranks and thus less susceptible to the particular functional form of the relationship as long as it is monotone. For non-monotonous relationships, however, e.g.~for $X$ uniformly distributed in $[-1,1]$ and $Y=X^2$, all three correlation coefficients can be zero on average.

Recently, Chatterjee has proposed a correlation coefficient $\xi_n$ that overcomes this restriction. He has shown that, as the data set size $n$ increases, this correlation coefficient converges to a theoretical quantity $\xi(X,Y)$ that is one, if $Y$ is a measurable function of $X$, and zero, if $X$ and $Y$ are independent \cite{chatterjee21}. The coefficient was subsequently generalized by Azadkia \& Chatterjee \cite{azadkia21} to a measure for conditional dependency and utilized in a model-free method for feature selection in machine learning (``FOCI'').

In the present article, we propose a simple rescaling of $\xi_n$ by dividing it through its upper bound in order to reduce its bias. In a series of Monte Carlo simulations, we measure the effect of this rescaling on the bias and mean squared error (MSE) of the estimator for $\xi$. Moreover, we consider the problem of constructing confidence intervals for $\xi$. As $\xi_n$ is not a maximum likelihood estimator, neither of the methods for constructing confidence intervals based on the likelihood function, such as the profile likelihood method \cite{fischer21} or variance estimation from its Hessian matrix \cite{greene00}, are applicable.

It is thus necessary to resort to resampling methods like the bootstrap \cite{efron79,diciccio96,davison97}, which can be used to estimate parameters of a theoretically known asymptotic distribution (typically the normal distribution) or to directly estimate non-parametric confidence intervals. Even if the asymptotic behavior of an estimator is known, non-parametric intervals might be preferable due to small sample sizes which make the applicability of the asymptotic theory dubious \cite{dey16}. The usual bootstrap is based on repeated drawing with replacement (n-out-of-n bootstrap), but there are situations when this method fails \cite{abadie08,lin23}. In such cases, the m-out-of-n bootstrap is an alternative, because it works under weaker conditions \cite{politis94,bickel97}.

As Lin \& Han \cite{lin23} have demonstrated the failure of the n-out-of-n bootstrap for estimating the variance of $\xi_n$, Dette \& Kroll \cite{dette23} proposed to estimate the variance of $\xi_n$ with an m-out-of-n bootstrap and to assume normality for the construction of a confidence interval. In the case of independent $X$ and $Y$, the asymptotic validity of the normality assumption was already proven by Chatterjee \cite{chatterjee21}, and for {\em continuous} $X$ and $Y$ it was subsequently proven by Lin \& Han \cite{lin22}. An alternative approach without assuming normality are non-parametric bootstrap confidence intervals. For continuous $Y$, however, we observed that the resulting intervals of the n-out-of-n bootstrap had a coverage probability close to zero and did not even include the point estimate. We therefore resorted to the non-parametric m-out-of-n bootstrap according to Politis \& Romano \cite{politis94}. For small sample size $n$, these intervals had a higher coverage probability in most of our simulations than the intervals based on the normality assumption, but the coverage probability of the latter intervals approached the nominal value more quickly, which makes them more recommendable for moderate and large sample sizes.

Measuring bias, MSE, and coverage probability requires a computation of the true value of $\xi$. In order to be able to compute this value, we have transformed Chatterjee's formula into an equivalent expression based on the conditional probability $P(Y\geq t|X=x)$, that further simplifies considerably for a continuous variable $Y$ (see Theorem \ref{theorem:xicondprob.cont} below).

This article is organized as follows: In Section \ref{sec:xicor}, we summarize the definition of Chatterjee's correlation $\xi_n$ and its relationship to a correlation index by Dette et al.~\cite{dette13}, in Section \ref{sec:normalization} we derive an upper bound for $\xi_n$ and suggest a normalization, in Section \ref{sec:simulation} we examine the effect of the normalization on bias and MSE, and in Section \ref{sec:coverage}, we examine its effect on the coverage probability of a non-parametric m-out-of-n bootstrap confidence interval and compare this approach with the confidence interval by Dette \& Kroll \cite{dette23}. Technical details about computing the limiting value for $\xi_n$ have been moved to the appendix.

\section{Chatterjee's correlation}
\label{sec:xicor}
Let $(X,Y)$ be a pair of random variables, where $Y$ is not a constant. Chatterjee's correlation coefficient is based on the ranks of the values observed for $Y$. Let $(x_1,y_1),\ldots,(x_n,y_n)$ be $n$ pairs of independently drawn sample values for $(X,Y)$, for which not all $y_i$ are identical, and which are ordered such that $x_1\leq x_2\leq\ldots\leq x_n$. Let $r_i$ be the rank of $y_i$, i.e., the number of values in $\vec{y}=(y_1,\ldots,y_n)$ that are less or equal to $y_i$. Additionally, let $l_i$ be the number of values in $\vec{y}$ that are greater or equal to $y_i$. Then Chatterjee's rank correlation coefficient is defined as
\begin{equation}
  \label{eq:xin}
  \xi_n(\vec{x},\vec{y}) = 1 - \frac{n\sum_{i=1}^{n-1} |r_{i+1} - r_i|}{2\sum_{i=1}^n l_i(n-l_i)}
\end{equation}
Note that in case of equal $x_i=x_j$ corresponding to different $y_i\neq y_j$, the value of $\xi_n$ depends on the tie breaking, which should be done at random according to Chatterjee. If all $x_i$ are different, this can not occur. Moreover, in the case of no ties among the $y_i$, the denominator is $n(n^2-1)/3$ and Eq.~(\ref{eq:xin}) simplifies to
\begin{equation}
  \label{eq:xin:noties}
  \xi_n(\vec{x},\vec{y}) = 1 - \frac{3\sum_{i=1}^{n-1} |r_{i+1} - r_i|}{n^2 -1}
\end{equation}

Chatterjee has shown that, if $(x_i,y_i)$ are i.i.d. pairs drawn according to the law of $(X,Y)$ and $Y$ is not almost surely constant, then $\xi_n(\vec{x},\vec{y})$ converges for $n\to\infty$ almost surely to the quantity
\begin{equation}
  \label{eq:xiorig}
  \xi(X,Y) = \frac{\int \operatorname{Var}\Big(E(1_{\{Y\geq t\}}|X)\Big) d\mu(t)}{\int \operatorname{Var}\Big(1_{\{Y\geq t\}}\Big) d\mu(t)}
\end{equation}
where the variable $t$ in the variance is treated as a constant parameter, and $\mu$ is the probability distribution of $Y$. If $X$ and $Y$ are independent, the random variable $E(1_{\{Y\geq t\}}|X)$ is equal to $P(Y\geq t)$ and its variance is zero. If $Y$ is a measurable function of $X$, then $Y=f(X)$ is constant for a given event $X=x$, and $E(1_{\{Y\geq t\}} | X)=1_{\{Y\geq t\}}$, which means that numerator and denominator in (\ref{eq:xiorig}) are identical.

In order to better understand expression (\ref{eq:xiorig}) and make it easier to compute in practical use cases, we can rewrite it in terms of the conditional probability $P(Y\geq t|X=x)$ and the unconditional probability
\begin{equation}
  \label{eq:totprob}
  P(Y\geq t) = \int P(Y\geq t\mid X=x) \, d\lambda(x)
\end{equation}
where $\lambda$ is the probability distribution of $X$.
First note that $E(1_{\{Y\geq t\}}|X)$ is just $P(Y\geq t|X)$. Further note that $1_{\{Y\geq t\}}$ is a Bernoulli variable which can take only two values:
\begin{displaymath}
  1_{\{Y\geq t\}} = \left\{\begin{array}{ll}
  1 & \mbox{ with probability }P(Y\geq t) \\
  0 & \mbox{ with probability }1-P(Y\geq t)
  \end{array}\right.
\end{displaymath}
The variance of a Bernoulli variable $B$ with $P(B=1)=p$ is $\operatorname{Var}(B)=p(1-p)$, and, with the use of the formula $\mbox{Var}(Z)=E(Z^2)-E(Z)^2$ for the variance in the numerator, Eq.~(\ref{eq:xiorig}) can be written as
\begin{align}
  \label{eq:xicondprob}
  \xi&(X,Y) = \\
  & \frac{\int\! \Big(\!\int\! P(Y\!\geq\! t|X\!=\!x)^2 d\lambda(x) - P(Y\!\geq\! t)^2\Big)  d\mu(t)}{\int P(Y\!\geq\! t)\Big(1-P(Y\!\geq\! t)\Big) \, d\mu(t)} \nonumber
\end{align}
where $\lambda$ is the probability distribution of $X$ and $\mu$ is the probability distribution of $Y$.

For a continuous random variable $Y$, two of the integrals can be solved symbolically, which results in a simplified formula for $\xi$ which was already proposed by Dette et al.~\cite{dette13} as a dependency measure\footnote{Dette et al.~used ``$\leq$'' instead of ``$\geq$'', but, for continuous $Y$, this is equivalent as can be shown with the law of total probability.}.

\begin{theorem}
  \label{theorem:xicondprob.cont}
  For a continuous random variable $Y$,
  $$\xi(X,Y) = 
  6\int \!\!\!\int\! P(Y\!\geq\! t|X\!=\!x)^2 d\lambda(x)\, d\mu(t) - 2$$
\end{theorem}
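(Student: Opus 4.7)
The plan is to start from Eq.~(\ref{eq:xicondprob}) and simplify the denominator and the $P(Y\geq t)^2$ term in the numerator by exploiting continuity of $Y$. The key observation is that if $F(t)=P(Y\leq t)$ is the cumulative distribution function of $Y$, then for continuous $Y$ we have $P(Y\geq t)=1-F(t)$ (since $P(Y=t)=0$), and by the probability integral transform $U:=F(Y)$ is uniformly distributed on $[0,1]$. This lets me convert the outer integrals $\int\!\cdot\,d\mu(t)$ into expectations of functions of $U$.

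First I would rewrite the denominator as
\begin{equation*}
  \int P(Y\!\geq\! t)\bigl(1-P(Y\!\geq\! t)\bigr)\,d\mu(t)
  = E\bigl[F(Y)(1-F(Y))\bigr] = E[U(1-U)],
\end{equation*}
and compute $\int_0^1 u(1-u)\,du = \tfrac{1}{2}-\tfrac{1}{3}=\tfrac{1}{6}$. Similarly, the constant piece of the numerator becomes
\begin{equation*}
  \int P(Y\!\geq\! t)^2\,d\mu(t)
  = E\bigl[(1-F(Y))^2\bigr] = E[(1-U)^2] = \tfrac{1}{3}.
\end{equation*}

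Substituting these two values into (\ref{eq:xicondprob}) yields
\begin{equation*}
  \xi(X,Y)
  = \frac{\int\!\!\int P(Y\!\geq\! t|X\!=\!x)^2\,d\lambda(x)\,d\mu(t)-\tfrac{1}{3}}{\tfrac{1}{6}},
\end{equation*}
which simplifies to the claimed identity after multiplying numerator and denominator by $6$.

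The only subtlety — and the step I would be most careful about — is justifying the use of $P(Y\geq t)=1-F(t)$ and the probability integral transform, both of which require continuity of $Y$ (if $F$ has atoms, $F(Y)$ is not uniform, and the constants $\tfrac{1}{6}$ and $\tfrac{1}{3}$ no longer arise). Once continuity is invoked, everything reduces to the elementary moments of a uniform variable on $[0,1]$, so no further obstacles are expected. Fubini's theorem is implicitly used to interchange the integrations over $t$ and $x$ in the numerator, which is legitimate because the integrand is nonnegative.
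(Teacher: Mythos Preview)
Your argument is correct and follows essentially the same route as the paper: you reduce Eq.~(\ref{eq:xicondprob}) by evaluating $\int P(Y\geq t)^2\,d\mu(t)=\tfrac{1}{3}$ and the denominator $=\tfrac{1}{6}$, then simplify. The only cosmetic difference is that the paper obtains these constants by writing $d\mu(t)=f(t)\,dt$ and antidifferentiating $P(Y\geq t)^k f(t)$ directly, whereas you phrase the same substitution as the probability integral transform $U=F(Y)\sim\mathrm{unif}(0,1)$; the two computations are equivalent (and your version has the mild advantage of not explicitly invoking a density).
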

\begin{proof}
For continuous $Y$, we can write $d\mu(t)=f(t)dt$ with $f(t)=-\frac{d}{dt}P(Y\geq t)$. This yields
\begin{align}
  \label{eq:P1}
  \int P(Y\!\geq\! t)^2  d\mu(t) &= -\frac{1}{3}P(Y\!\geq\! t)^3\Big|_{-\infty}^{\infty} = \frac{1}{3} \\
  \label{eq:P2}
  \int P(Y\!\geq\! t)  d\mu(t) &= -\frac{1}{2}P(Y\!\geq\! t)^2\Big|_{-\infty}^{\infty} = \frac{1}{2}
\end{align}
Inserting these values into (\ref{eq:xicondprob}) yields the theorem. Note that equalities (\ref{eq:P1}) and (\ref{eq:P2}) no longer hold for discrete random variables: In this case additional positive sums occur on the right hand side and the expressions are strictly greater than $1/3$, or $1/2$, respectively.
\end{proof}

The interesting property of $\xi$ with respect to dependency between $Y$ and $X$ can be readily seen in the expression in the above theorem. For independence of $X$ and $Y$, it is $P(Y\!\geq\! t|X=x)^2= P(Y\!\geq\! t)^2$ and the integral evaluates to $1/3$. If $Y$ is a function of $X$, i.e., $Y=f(X)$, $P(Y\!\geq\! t|X=x)^2$ is one if $f(x)\geq t$ and zero if $f(x)<t$, which means that the inner integral becomes $P(Y\geq t)$ and the outer integral evaluates to $1/2$.

\section{Normalization of $\xi_n$}
\label{sec:normalization}
Although Chatterjee has proven in \cite{chatterjee21} that the coefficient $\xi_n$ converges almost surely to $\xi$, the average value of $\xi_n$ can be considerably different from the asymptotic value $\xi$ for small sample sizes. Consider, e.g., the extreme case of $X=Y$. In this case Pearson's correlation is always one, whereas $\xi_n$ yields a considerably different value as the following sample R session with the use of the package {\em XICOR} \cite{xicor_r} demonstrates:
\begin{quote}
\begin{verbatim}
> x <- runif(20)
> cor(x,x)
[1] 1
> xicor(x,x)
[1] 0.8571429
\end{verbatim}
\end{quote}
This is because the maximum possible value of Eq.~(\ref{eq:xin}) is less than one and this upper bound decreases as $n$ becomes smaller. To understand what the maximum value of $\xi_n(\vec{x},\vec{y})$ is over all possible permutations of the vector $\vec{x}$, we need the following

\begin{lemma}
  \label{lemma:1}
  For real numbers $x_1,\ldots,x_n$, it is
  $$\sum_{k=1}^{n-1} | x_{k+1} - x_k| \geq x_{max} - x_{min}$$
  where $x_{max}=\max\{x_1,\ldots,x_n\}$ and $x_{min}=\min\{x_1,\ldots,x_n\}$. Equality holds if and only if the $x_i$ are sorted in ascending or descending order.
\end{lemma}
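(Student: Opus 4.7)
The plan is to reduce the inequality to a single application of the triangle inequality via telescoping. First I would pick any indices $p,q\in\{1,\ldots,n\}$ with $x_p=x_{min}$ and $x_q=x_{max}$. Since both sides of the claimed inequality are invariant under reversal of the sequence $x_1,\ldots,x_n$, I may assume $p\leq q$. The key chain is then
$$x_{max}-x_{min} \;=\; \Big|\sum_{k=p}^{q-1}(x_{k+1}-x_k)\Big| \;\leq\; \sum_{k=p}^{q-1}|x_{k+1}-x_k| \;\leq\; \sum_{k=1}^{n-1}|x_{k+1}-x_k|,$$
which already establishes the inequality. The degenerate case $x_{max}=x_{min}$ is trivial, so in what follows I would assume $x_{max}>x_{min}$.

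For the equality characterization, one direction is a one-line telescoping check: if $x_1\leq\cdots\leq x_n$ (or the reverse), the absolute values drop away and the sum collapses to $x_n-x_1=x_{max}-x_{min}$. For the converse I would assume equality in the full chain and read off two consequences. Equality in the triangle step forces the signed summands $x_{k+1}-x_k$ for $p\leq k\leq q-1$, whose sum is the positive number $x_{max}-x_{min}$, to all be non-negative, yielding $x_p\leq x_{p+1}\leq\cdots\leq x_q$. Equality in the second step forces the omitted summands $|x_{k+1}-x_k|$ for $k<p$ and for $k\geq q$ to vanish, so the sequence is constant equal to $x_{min}$ on $\{1,\ldots,p\}$ and equal to $x_{max}$ on $\{q,\ldots,n\}$. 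Concatenating, the whole sequence is non-decreasing; starting instead from $p>q$ yields the non-increasing case.

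The only subtlety I anticipate is that $x_{min}$ and $x_{max}$ may be attained at several indices, so the choice of $p$ and $q$ is not canonical. This is not a real obstacle, however: once any admissible pair is fixed, the argument above pins down the entire order of $x_1,\ldots,x_n$, so the conclusion is independent of the choice. No tool beyond the triangle inequality and a telescoping sum is required.
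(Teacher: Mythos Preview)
Your proof is correct and follows essentially the same telescoping-plus-triangle-inequality approach as the paper. For the converse direction of the equality characterization your argument is even a bit more direct: the paper proceeds by contradiction, splitting the sum $\sum_{k=i}^{j-1}|x_{k+1}-x_k|$ into three sub-ranges and re-applying the already-proven inequality to each, whereas you simply read off monotonicity on $\{p,\ldots,q\}$ from the equality condition in the triangle inequality and constancy outside $\{p,\ldots,q\}$ from non-negativity of the omitted summands.
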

\begin{proof}
  If $x_{min}=x_{max}$, the assertions are obvious. Let us assume that $i$ and $j$, $i\neq j$, are the indices with $x_i=x_{max}$ and $x_j=x_{min}$ and that, without loss of generality\footnote{\label{note:1}In the other case, we simply invert all indices $k\to n-k$.}, $i>j$. Then it is
  \begin{align*}
    x_{max}- & x_{min} = | x_i - x_j | = | (x_i -x_{i-1}) \\
    & \quad + (x_{i-1} -x_{i-2}) \ldots + (x_{j+1}-x_j) | \\
    &\leq \sum_{k=j}^{i-1} |x_{k+1}-x_k| \leq \sum_{k=1}^{n-1} |x_{k+1}-x_k| \\
  \end{align*}
  If the $x_i$ are sorted, equality holds, because for $x_{min}=x_1\leq x_2\leq\ldots\leq x_n=x_{max}$ it is\footnotemark[1]
  $$\sum_{k=1}^{n-1}|x_{k+1} - x_k| = \sum_{k=1}^{n-1}(x_{k+1} - x_k) = x_n - x_1$$
  In order to prove that the reverse holds too, i.e., that if the sum takes its minimum, the $x_i$ are sorted, let us assume that the sum takes the minimum value $x_{max}-x_{min}$, that $x_i=x_{min}$ and $x_j=x_{max}$, and that\footnote{In this case we will see that the sequence has ascending order. If $j>j$, we will have descending order and the proof remains the same, but with all indices inverted.} $i<j$. Because of
  \begin{align*}
    x_{max}-x_{min} &=\sum_{k=1}^{n-1}|x_{k+1}-x_k|\\
    &\geq \sum_{k=i}^{j-1}|x_{k+1}-x_k| \geq x_{max} - x_{min}
  \end{align*}
  equality holds and $x_1=x_2=\ldots=x_i$ and $x_j=x_{j+1}=\ldots =x_n$. Let us assume that the sequence is not sorted in ascending order, which means that there are indices $i',j'$ with $i\leq i'<j' \leq j$ and $x_{i'}>x_{j'}$. Applying the already proven inequality to different parts of the sum, we obtain
  \begin{align*}
    x_{max}-&x_{min} = \sum_{k=i}^{j-1} | x_{k+1}-x_k|
    = \underbrace{\sum_{k=i}^{i'-1} | x_{k+1}-x_k|}_{\geq \;x_{i'} - x_{min}} \\
    &\quad + \underbrace{\sum_{k=i'}^{j'-1} | x_{k+1}-x_k|}_{\geq\; x_{i'} - x_{j'}} + \underbrace{\sum_{k=j'}^{j-1} | x_{k+1}-x_k|}_{\geq\; x_{max} - x_{j'}} \\
    &\geq x_{max} - x_{min} + 2(x_{i'} - x_{j'})
  \end{align*}
  According to the assumption $x_{i'}>x_{j'}$, it is $2(x_{i'} - x_{j'})>0$, which is a contradiction, and the sequence must be in ascending order.
\end{proof}
With Lemma \ref{lemma:1}, it is easy to show that, for given values $\vec{x}$ and $\vec{y}$,  $\xi_n$ takes its maximum value if the $r_i$ in Eq.~(\ref{eq:xin}) are sorted.

\begin{theorem}
  \label{theorem:maxxi}
  Let $\xi_n$ be defined as in Eq.~(\ref{eq:xin}). Then
  $$\xi_n(\vec{x},\vec{y}) \leq \xi_n(\vec{y},\vec{y})$$
  where equality only holds for permutations of $\vec{x}$ that sort the $y_i$ in $(x_i,y_i)_{i=1}^n$ in ascending (or descending) order. The upper bound on the right is
  $$\xi_n(\vec{y},\vec{y}) =
  1 - \frac{n(n - r_{min})}{2\sum_{i=1}^n l_i(n-l_i)}$$
  where $r_{min}$ is the absolute frequency of the smallest value in $\vec{y}$. For $n>2$, it is $\xi_n(\vec{y},\vec{y})>0$. When all $y_i$ are different, the upper bound is $(n-2)/(n+1)$.
\end{theorem}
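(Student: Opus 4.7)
My plan is to apply Lemma~\ref{lemma:1} to the sequence of ranks $(r_1, \ldots, r_n)$ that appears in the numerator of Eq.~(\ref{eq:xin}), exploiting that the denominator $2\sum_i l_i(n - l_i)$ depends only on the multiset $\{y_1, \ldots, y_n\}$ and not on the order induced by the permutation of $\vec{x}$. Maximizing $\xi_n(\vec{x}, \vec{y})$ over permutations of $\vec{x}$ therefore reduces to minimizing $\sum_{i=1}^{n-1} |r_{i+1} - r_i|$. Since $r_i$ counts the $y_j \leq y_i$, the largest value attained by any $r_i$ is $n$ and the smallest is the frequency $r_{\min}$ of the smallest element of $\vec{y}$; Lemma~\ref{lemma:1} then gives $\sum |r_{i+1} - r_i| \geq n - r_{\min}$, with equality precisely when the $r_i$ are monotonic. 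Because the map $y_i \mapsto r_i$ is weakly order-preserving and strictly separates distinct $y$-values, this is equivalent to the $y_i$ being in ascending or descending order, which the permutation $\vec{x} = \vec{y}$ (under the convention $x_1 \leq \cdots \leq x_n$) realizes. Substituting $\sum |r_{i+1} - r_i| = n - r_{\min}$ into Eq.~(\ref{eq:xin}) then yields the stated closed form for $\xi_n(\vec{y}, \vec{y})$.

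For strict positivity when $n > 2$, I would unpack $\sum_i l_i(n - l_i)$ by grouping indices with equal $y$-values: let $v_1 < \cdots < v_k$ be the distinct values of $\vec{y}$ with multiplicities $f_1, \ldots, f_k$ (so $r_{\min} = f_1$), and set $L_j := \sum_{s=j}^k f_s$, which is the common value of $l_i$ whenever $y_i = v_j$. Then $\sum_i l_i(n - l_i) = \sum_{j=2}^k f_j L_j(n - L_j)$ (the $j = 1$ term vanishes because $L_1 = n$); for each $j \geq 2$, $L_j$ is an integer in $\{1, \ldots, n - 1\}$, so $L_j(n - L_j) \geq n - 1$. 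This gives $2\sum_i l_i(n - l_i) \geq 2(n - 1)(n - r_{\min}) > n(n - r_{\min})$ whenever $n > 2$, using $n - r_{\min} > 0$ from $Y$ being non-constant, which proves positivity. The all-distinct case is immediate: $f_j = 1$ gives $r_{\min} = 1$ and $\sum_i l_i(n - l_i) = \sum_{l=1}^{n} l(n - l) = n(n^2 - 1)/6$, collapsing the closed form to $1 - 3/(n + 1) = (n - 2)/(n + 1)$.

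The main obstacle I anticipate is the strict positivity step, where the crude pointwise bound $L_j(n - L_j) \geq n - 1$ is only just sharp enough to beat $n(n - r_{\min})/2$, with the margin vanishing exactly at $n = 2$. A minor additional point is converting the ``sorted $r_i$'' equality condition from Lemma~\ref{lemma:1} into the ``sorted $y_i$'' condition stated in the theorem, which follows immediately from $y_i \mapsto r_i$ being strictly order-preserving on distinct $y$-values.
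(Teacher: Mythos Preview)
Your proposal is correct and follows essentially the same route as the paper: reduce to minimizing $\sum_i |r_{i+1}-r_i|$ via Lemma~\ref{lemma:1}, identify $r_{\max}-r_{\min}=n-r_{\min}$, and for positivity bound each of the $n-r_{\min}$ nonzero summands $l_i(n-l_i)$ below by $n-1$ to obtain $\sum_i l_i(n-l_i)\geq (n-1)(n-r_{\min})$. Your grouping notation $(v_j,f_j,L_j)$ and your explicit remark that sorted $r_i$ is equivalent to sorted $y_i$ make the argument a bit more transparent than the paper's version, but the substance is the same.
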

\begin{proof}
  First note that the denominator in Eq.~(\ref{eq:xin}) only depends on the values $\vec{y}$ and is unaffected by a permutation of $\vec{x}$. Thus, $\xi_n$ is maximal if $\sum_i|r_{i+1}-r_i|$ is minimal. If the $r_i$ are in ascending or descending order, the value of this sum is $r_{max}-r_{min}=n-r_{min}$, which is the smallest possible value of the sum for all permutations of the $r_i$ according to Lemma \ref{lemma:1}.

  To see that the upper bound is strictly positive for $n>2$, first note that, for $x_i=x_{min}$, it is $l_i(n-l_i)=n(n-n)=0$. As the function $f(x)=x(n-x)$ takes its maximum for $n=n/2$ and decreases with increasing distance $|x-n/2|$, the remaining $n-r_{min}$ summands in $\sum_{i=1}^n l_i(n-l_i)$ are greater than the smallest possible value $1\cdot(n-1)$, which yields
  \begin{align*}
    \sum_{i=1}^n l_i(n &- l_i) \geq (n-r_{min})(n-1)
  \end{align*}
  and thus
  \begin{align*}
    \xi_n(\vec{y},\vec{y}) &\geq 1- \frac{n(n-r_{min})}{2(n-r_{min})(n-1)} = 1-\frac{n}{2(n-1)} 
  \end{align*}
\end{proof}

As the maximum value for $\xi_n$ according to Theorem \ref{theorem:maxxi} is less than one, it must be a biased estimator in case of $\xi(X,Y)=1$, i.e., a strict dependency between $X$ and $Y$. It is thus natural to rescale $\xi_n$ to the range $[-1,1]$ by normalizing it with its maximum value:
\begin{equation}
  \label{eq:xi'}
  \xi'_{n}(\vec{x},\vec{y}) = \max\left\{-1, \frac{\xi_n(\vec{x},\vec{y})}{\xi_n(\vec{y},\vec{y})}\right\}
\end{equation}
\paragraph{Remark:} The cutoff at $-1$ has no effect if there are no ties among the $y_i$. In this case, Chatterjee already observed in \cite{chatterjee21} that
$\xi_n$ takes its minimum when when the top $n/2$ values of $y_i$ are placed alternately with the bottom $n/2$ values. This results in a lower bound
\begin{align*}
  \xi_n(\vec{x}, \vec{y}) &\geq 1-\frac{3n}{2(n+1)} \\
  &\geq -\frac{n-2}{n+1} = -\xi_n(\vec{y}, \vec{y})
\end{align*}
In case of ties, however, there are rare situations for which the ratio in (\ref{eq:xi'}) is less than one. An example is the combination $\vec{x}=(1,2,\ldots,n)$ and $\vec{y}=(0,1,0,1,\ldots,1,0)$ for odd\footnote{For even $n$, we have not found such an example.} $n$, which leads to
$$\xi_n(\vec{x}, \vec{y}) = 1-\frac{2n}{n+1} <
-\xi_n(\vec{y}, \vec{y}) = -\left(1-\frac{2n}{n^2-1}\right)$$
In our simulated models, such a situation never occurred and the cutoff was not necessary. We nevertheless leave it in the definition of $\xi_n'$ because it makes it trivial to prove the following theorem:

\begin{theorem}
  \label{theorem:xi'}
  The normalized estimator $\xi_n'$ as defined in Eq.~(\ref{eq:xi'}) has the following properties:
  \begin{enumerate}
  \item[(a)] $\xi_n'$ is an asymptotically unbiased estimator for $\xi$.
  \item[(b)] $|\xi_n'|\leq 1$
  \item[(c)] $\sign(\xi_n') = \sign(\xi_n)$
  \item[(d)] For $\xi_n>=0$, it is $\xi_n'\geq \xi_n$
  \end{enumerate}
\end{theorem}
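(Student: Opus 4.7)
The plan is to derive (b), (c), and (d) directly from the upper bound and positivity results in Theorem~\ref{theorem:maxxi}, while (a) additionally relies on Chatterjee's almost sure convergence $\xi_n \to \xi$ together with $\xi_n(\vec{y},\vec{y}) \to 1$.

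For (b), Theorem~\ref{theorem:maxxi} gives $\xi_n(\vec{x},\vec{y}) \leq \xi_n(\vec{y},\vec{y})$ and $\xi_n(\vec{y},\vec{y}) > 0$ for $n>2$, so the ratio in the definition of $\xi_n'$ is at most $1$, and the $\max\{-1,\cdot\}$ enforces the matching lower bound. For (c), positivity of the denominator means $\xi_n/\xi_n(\vec{y},\vec{y})$ has the same sign as $\xi_n$; the cutoff at $-1$ maps any value below $-1$ to $-1$ (still negative) and leaves nonnegative values unchanged, preserving signs. For (d), I first observe that the explicit formula for $\xi_n(\vec{y},\vec{y})$ in Theorem~\ref{theorem:maxxi} subtracts a nonnegative quantity from $1$, so $\xi_n(\vec{y},\vec{y}) \in (0,1]$; when $\xi_n \geq 0$ the ratio is nonnegative (so the cutoff is inactive) and at least as large as $\xi_n$.

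For (a), the approach is to show $\xi_n' \to \xi$ almost surely and then obtain $E[\xi_n'] \to \xi$ via dominated convergence using the uniform bound $|\xi_n'| \leq 1$ established in (b). Since Chatterjee's theoretical $\xi$ lies in $[0,1]$, so given $\xi_n \to \xi$ a.s.\ it suffices to show $\xi_n(\vec{y},\vec{y}) \to 1$ a.s., after which the continuous mapping theorem applied to $\max\{-1, a/b\}$ at $(\xi,1)$ finishes the argument. For continuous $Y$, Theorem~\ref{theorem:maxxi} gives $\xi_n(\vec{y},\vec{y}) = (n-2)/(n+1)$ almost surely and convergence is trivial. The main obstacle is the discrete or mixed case: here I would argue by the strong law of large numbers that $r_{min}/n$ converges a.s.\ to $P(Y = y_{min})$ and that $\sum_i l_i(n-l_i)/n^3$ converges a.s.\ to $\int P(Y\geq t)(1-P(Y\geq t))\, d\mu(t) > 0$ by standard empirical-distribution reasoning, so that the subtracted fraction in $\xi_n(\vec{y},\vec{y})$ is of order $1/n$ and hence vanishes.
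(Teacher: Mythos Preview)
Your proof is correct and follows the same overall structure as the paper's: (b)--(d) are read off from Theorem~\ref{theorem:maxxi}, and (a) is obtained via almost sure convergence of numerator and denominator plus dominated convergence using the bound from (b). The only notable difference is how you establish $\xi_n(\vec{y},\vec{y})\to 1$ a.s.: you argue by direct computation from the explicit formula in Theorem~\ref{theorem:maxxi}, splitting into continuous and discrete/mixed cases, whereas the paper simply observes that $\xi_n(\vec{y},\vec{y})$ is itself a Chatterjee correlation (with $X$ replaced by $Y$) and therefore, by Chatterjee's own convergence theorem, converges almost surely to $\xi(Y,Y)=1$. This one-line shortcut sidesteps the case distinction and the empirical-distribution reasoning you sketch, and it handles all distributions of $Y$ uniformly.
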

\begin{proof} (b)--(d) follow directly from Theorem \ref{theorem:maxxi}. To see (a), remember that Chatterjee has proven in \cite{chatterjee21} that, for $n\to\infty$, $\xi_n$ converges almost surely to $\xi$. Moreover it is $\xi(Y,Y)=1$. The numerator in (\ref{eq:xi'}) thus converges almost surely to $\xi(X,Y)$ and the denominator almost surely to one. As the union of two null sets is a null set, too, the ratio converges almost surely towards $\xi(X,Y)$. According to (b), $\xi_n'$ is bounded by a constant, and it follows from Lebesgue's dominated convergence theorem that the convergence holds in expectation, too.
\end{proof}

The normalized quantity $\xi_n'$ has been constructed in such a way that it has a smaller bias than $\xi_n$ if $\xi=1$, and in this case the mean squared error will be smaller, too. This does not necessarily hold if the true value $\xi$ is less than one; in the special case of independent $X$ and $Y$, e.g., the MSE of $\xi'_n$ will be greater because $\xi=0$ and $|\xi'_n|\geq |\xi_n|$. The effect of the normalization (\ref{eq:xi'}) on the coverage probability of confidence intervals is yet another question that needs to be addressed. We therefore made extensive Monte Carlo simulations to measure these effects for different data generation processes.

\section{Effect of normalization}
\label{sec:simulation}
For ten different models for the relationship between $X$ and $Y$ and for a range of data set sizes $n$, we have simulated $N=10^6$ data sets and compared the following quality indices for the estimators $\xi_n$ and $\xi'_n$:
\begin{itemize}
\item bias
\item mean squared error (MSE)
\item coverage probability of a $90\%$ confidence interval
\end{itemize}
The computation of $\xi_n$ has been done with the function {\em xicor} from the R package {\em XICOR} \cite{xicor_r}. All three quality indices require knowledge of the true value $\xi$, which we have computed as described in the appendix. As the construction of confidence intervals for $\xi$ turned out to be more complicated than expected, we have devoted the separate Section \ref{sec:coverage} to this problem.

\begin{figure*}
  \centering
  \subfigure[model 1: $Y\sim X + \varepsilon$ (continuous)]{
      \includegraphics[width=0.95\columnwidth]{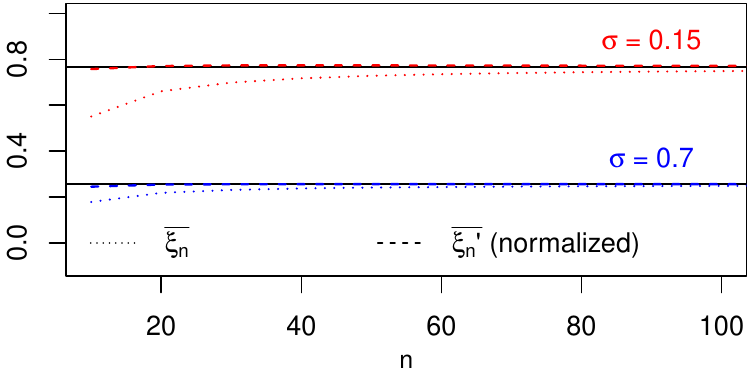}
      \label{fig:bias:model1}
  }
  \subfigure[model 3: $Y\sim \sin(2\pi X) + \varepsilon$ (continuous)]{
      \includegraphics[width=0.95\columnwidth]{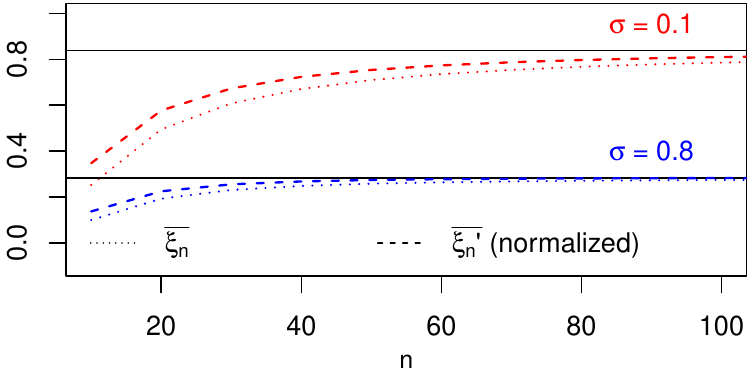}
      \label{fig:bias:model3}
  }
  \subfigure[model 4: $Y\sim XY$ (discrete)]{
      \includegraphics[width=0.95\columnwidth]{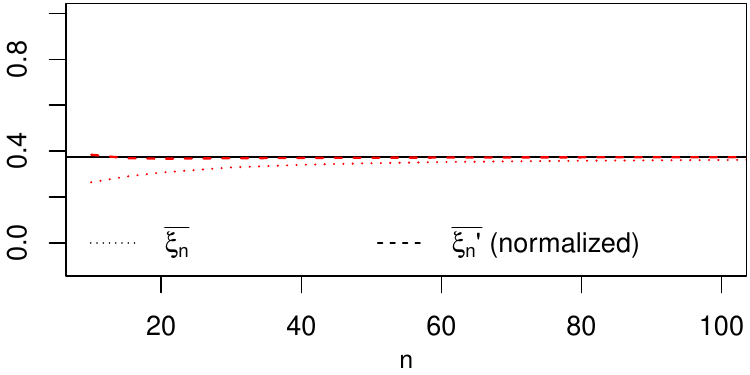}
      \label{fig:bias:model4}
  }
  \subfigure[model 6: $Y\sim X^2 + \varepsilon$ (discrete)]{
      \includegraphics[width=0.95\columnwidth]{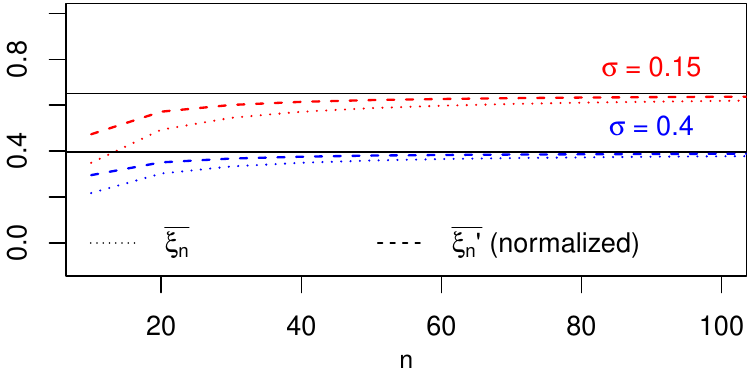}
      \label{fig:bias:model6}
  }
  \caption{\label{fig:bias} Average value of $\xi_n$ and the normalized $\xi_n'$ as a function of $n$ for different models. The solid black lines show the true value of $\xi$}
\end{figure*}

\subsection{Simulated models}
\label{sec:simulation:models}
The ten models include four models from Chatterjee's original paper, namely three continuous models \cite[Fig. 2]{chatterjee21} and the discrete model in \cite[Sec.~4.2]{chatterjee21}.

We use the notations {\em unif}$(a,b)$ for a uniform distribution between $a$ and $b$, {\em norm}$(\mu,\sigma^2)$ for a normal distribution with mean $\mu$ and variance $\sigma^2$, {\em equal}$(a,b,n)$ for a discrete uniform distribution with $n$ values between $a$ and $b$, and {\em binom}$(n,p)$ for a binomial distribution with size $n$ and probability $p$.

\paragraph{Model 1:} $\boldsymbol{Y=X+\varepsilon}$ \\
with $X\sim \mbox{unif}(-1,1)$ and $\varepsilon\sim \mbox{norm}(0,\sigma^2)$.

This is the first model of Fig.~2 in \cite{chatterjee21}.

\paragraph{Model 2:} $\boldsymbol{Y=X^2+\varepsilon}$ \\
with $X\sim \mbox{unif}(-1,1)$ and $\varepsilon\sim \mbox{norm}(0,\sigma^2)$.

This is the second model of Fig.~2 in \cite{chatterjee21}.

\paragraph{Model 3:} $\boldsymbol{Y=\sin(2\pi X)+\varepsilon}$ \\
with $X\sim \mbox{unif}(-1,1)$ and $\varepsilon\sim \mbox{norm}(0,\sigma^2)$.

This is the third model of Fig.~2 in \cite{chatterjee21}.

\paragraph{Model 4:} $\boldsymbol{Y=XZ}$ \\
with $X\sim \mbox{binom}(1,p)$ and $Z\sim \mbox{binom}(1,p')$.

This is the model used in \cite{chatterjee21} as an example for which the distribution looks normal even for dependent $X$ and $Y$. Chatterjee used the values $p=0.4$ and $p'=0.5$.

\paragraph{Model 5:} $\boldsymbol{Y=X+\varepsilon}$\\
with $X\sim \mbox{equal}(m,-1,1)$ and $\varepsilon\sim -\sigma\sqrt{m'} + \frac{2\sigma}{\sqrt{m'}}\mbox{binom}(m',0.5)$.

This is a discrete version of Model 1. $\sigma$ is the standard deviation of the noise $\varepsilon$. For $m$ and $m'$, we have used the fixed values $m=6$, $m'=2$.

\paragraph{Model 6:} $\boldsymbol{Y=X^2+\varepsilon}$\\
with $X\sim \mbox{equal}(m,-1,1)$ and $\varepsilon\sim -\sigma\sqrt{m'} + \frac{2\sigma}{\sqrt{m'}}\mbox{binom}(m',0.5)$.

This is the discrete version of Model 2, where the noise is simulated by a multiple fair coin toss. The shift and the scaling factor are chosen in such a way, that $\varepsilon$ has zero mean and variance $\sigma^2$.   For $m$ and $m'$, we have used the fixed values $m=6$, $m'=2$.

\paragraph{Model 7:} $\boldsymbol{Y=\sin(2\pi X)+\varepsilon}$\\
with $X\sim \mbox{equal}(m,-1,1)$ and $\varepsilon\sim -\sigma\sqrt{m'} + \frac{2\sigma}{\sqrt{m'}}\mbox{binom}(m',0.5)$.

This is a discrete version of Model 1. $\sigma$ is the standard deviation of the noise $\varepsilon$. For $m$ and $m'$, we have used the fixed values $m=6$, $m'=2$.

\paragraph{Models 8-10:} $\boldsymbol{Y,X}$ {\bf independent}

Here we examined one continuous and two discrete examples:
\begin{description}\itemsep0ex
  \item[8)] $X\sim \mbox{unif}(-1,1)$ and $Y\sim \mbox{unif}(-1,1)$
  \item[9)] $X\sim \mbox{equal}(m,-1,1)$ and $Y\sim \mbox{equal}(m',-1,1)$
  \item[10)] $X\sim \mbox{binom}(m,p)$ and $Y\sim \mbox{binom}(m',p')$
\end{description}
We have used $m=3, p=0.5$ and $m'=6, p'=0.3$

\begin{figure*}
  \centering
  \subfigure[model 1: $Y\sim X + \varepsilon$ (continuous)]{
      \includegraphics[width=0.95\columnwidth]{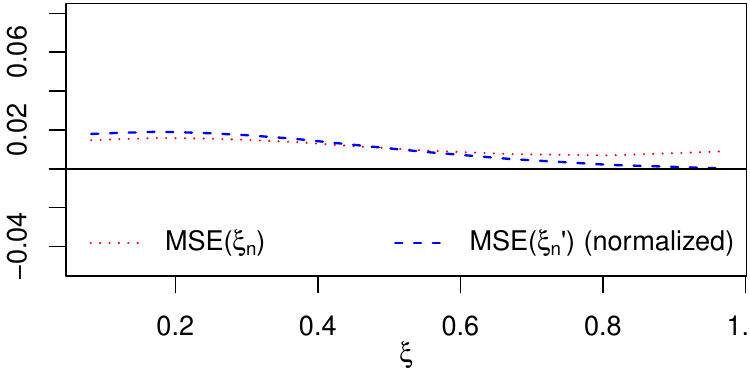}
      \label{fig:mse:model1}
  }
  \subfigure[model 2: $Y\sim X^2 + \varepsilon$ (continuous)]{
      \includegraphics[width=0.95\columnwidth]{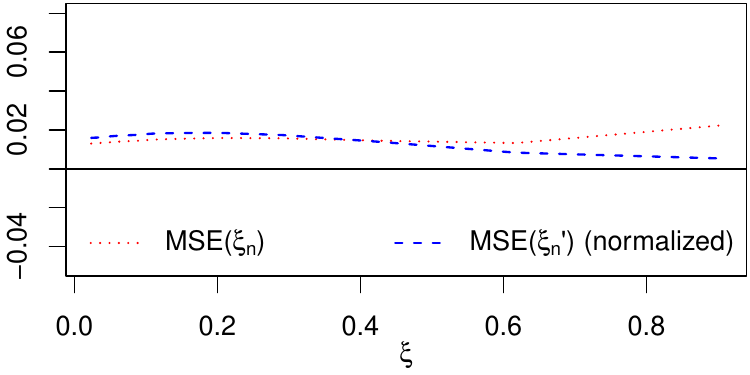}
      \label{fig:mse:model6}
  }
  \subfigure[model 3: $Y\sim \sin(2\pi X) + \varepsilon$ (continuous)]{
      \includegraphics[width=0.95\columnwidth]{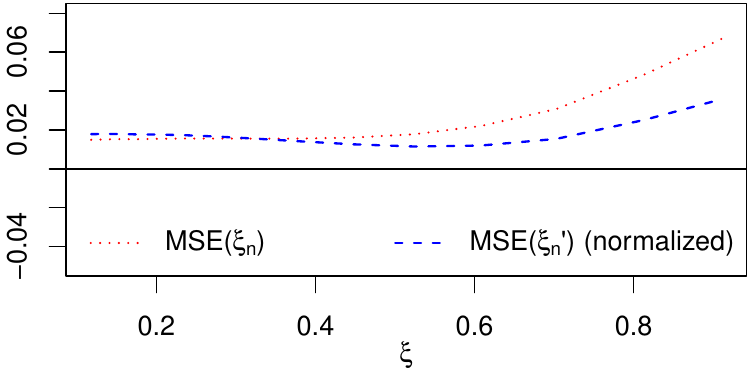}
      \label{fig:mse:model3}
  }
  \subfigure[model 7: $Y\sim \sin(2\pi X) + \varepsilon$ (discrete)]{
      \includegraphics[width=0.95\columnwidth]{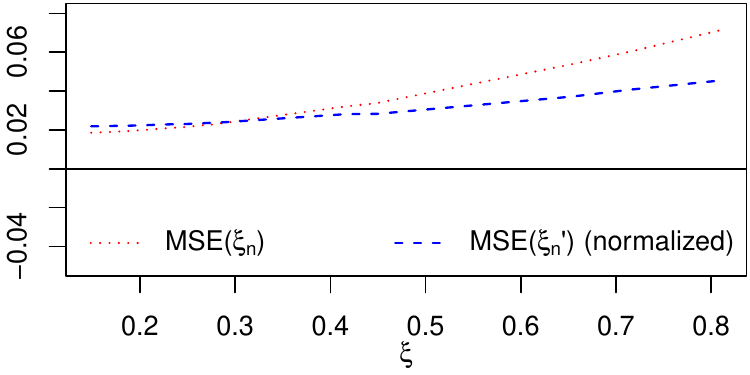}
      \label{fig:mse:model7}
  }
  \caption{\label{fig:mse} MSE of $\xi_n$ and the normalized $\xi_n'$ as a function of the true value of $\xi$ for different models and $n=30$.}
\end{figure*}

\subsection{Effect on bias}
\label{sec:simulation:bias}
For $\xi=0$, i.e., for independence of $X$ and $Y$, which we simulated in models 8-10, already $\xi_n$ itself showed to be unbiased. As $\xi'_n$ is only a scaled version of $\xi_n$, it was unbiased, too, in these cases.

For $\xi>0$, however, the normalization reduced the bias in all cases. As the examples in Fig.~\ref{fig:bias} show, the normalization made the estimator almost unbiased even for small $n$ for models with a monotonous relationship between $X$ and $Y$, as in models 1 and 4. We observed the same for model 5 (not shown in Fig.~\ref{fig:bias}). For more complicated relationships, as in models 2, 3, 6, and 7, the bias reduction was not as strong, but nevertheless quite distinct. This was to be expected because, even for strict functional relationship $Y=f(X)$, $\xi_n$ takes its maximum value only for monotonous $f$ (see Theorem \ref{theorem:maxxi}). Normalizing by the maximum value thus can only partially correct for this bias.

\subsection{Effect on MSE}
\label{sec:simulation:mse}
As the normalization scales $\xi_n$ by a factor greater than one, it increases its variance. It thus happened that, for small $\xi$, the increase in variance outweighed the bias reduction for some models. Our models 1-3 and 5-7 allowed for covering the full range of $\xi$ by varying the parameter $\sigma$. The resulting MSE and its relation to the true value for $\xi$ for some fixed value of $n$ is shown in Fig.~\ref{fig:mse}. As can be seen, the normalization typically reduced the MSE if $\xi\gtrsim 0.4$ and otherwise slightly increased it. The MSE reduction for high $\xi$ was greater, however, than its increase for small $\xi$.

\section{Confidence intervals for $\xi$}
\label{sec:coverage}


Basically, there are two different approaches to construct a confidence interval for $\xi$ on basis of $\xi_n$:
\begin{enumerate}
\item the normal approximation interval $\xi_n\pm z_{1-\alpha/2} \hat{\sigma}_{\xi_n}$ based on some estimator $\hat{\sigma}_{\xi_n}^2$ for $\operatorname{Var}(\xi_n)$
\item a non-parametric interval based on a bootstrap estimation of the distribution of $\xi_n$
\end{enumerate}
That the normality assumption required for the first approach is asymptotically justified for independent $X$ and $Y$ was already proven by Chatterjee \cite{chatterjee21}. Moreover, Lin \& Han have proven its asymptotic validity in the continuous case, and they even presented a consistent estimator for the variance in this situation \cite[theorem 1.2]{dette23}. For the discrete case, asymptotic normality of $\xi_n$ has not yet been proven and no estimator for $\operatorname{Var}(\xi_n)$ is known, yet Dette \& Kroll \cite{dette23} have suggested to use the m-out-of-n bootstrap for estimating the variance and to construct confidence intervals on assuming normality. They have used the m-out-of-n bootstrap, because the usual n-out-of-n bootstrap has been demonstrated to fail in estimating $\hat{\sigma}_{\xi_n}$ \cite{lin23}.

\begin{figure*}
  \centering
  \subfigure[$X, Y\sim \operatorname{unif}(-1,1)$]{
      \includegraphics[width=0.95\columnwidth]{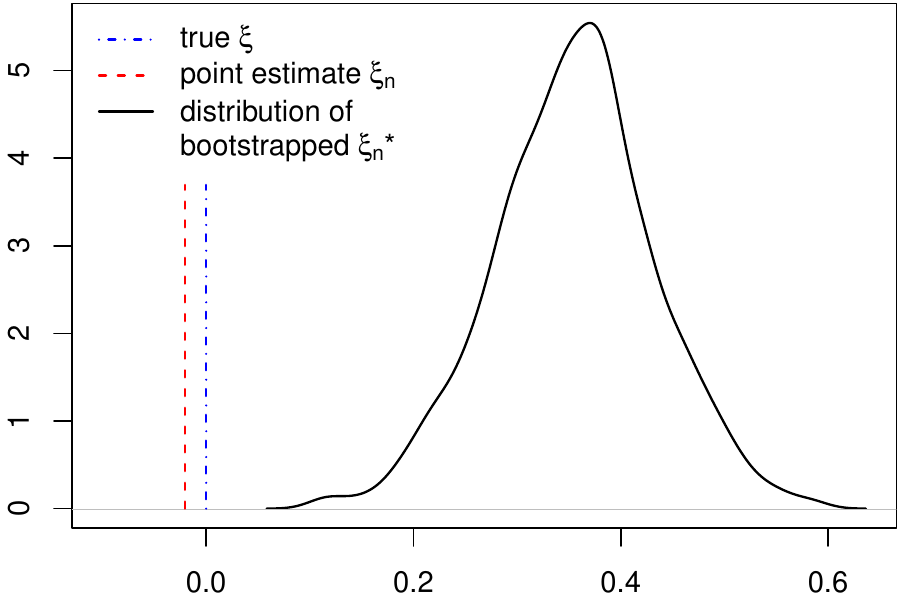}
      \label{fig:bootstrap:continuous}
  }
  \subfigure[$X, Y\sim \operatorname{equal}(5,-1,1)$]{
    \includegraphics[width=0.95\columnwidth]{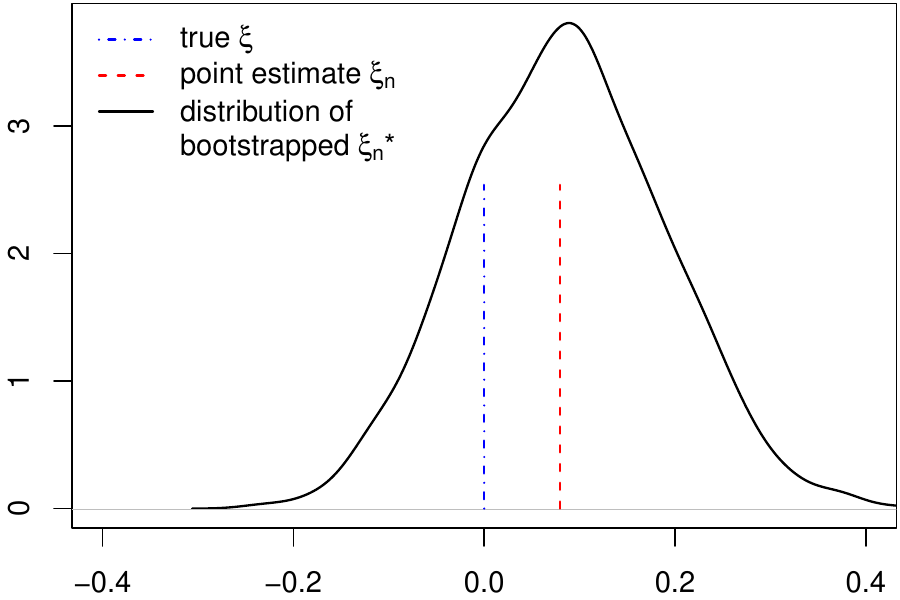}
      \label{fig:bootstrap:discrete}
  }
  \caption{\label{fig:bootstrap} Typical distribution (kernel density plot) of $\xi_n(\vec{x},\vec{y})$ for $n=50$, computed from n-out-of-n bootstrap samples $(\vec{x},\vec{y})^*$ for independent continuous (left) and discrete (right) $X$ and $Y$.}
\end{figure*}

The second approach has not yet been studied in the literature, although it circumvents the question whether normality asymptotically holds. In the present study, we implemented and evaluated it, and also compared it to the method by Dette \& Kroll. At first, we had tried the usual n-out-of-n bootstrap, which has been reported to provide non-parametric confidence intervals with decent coverage probability for a wide range of estimators \cite{diciccio96,dalitz17}. With $R=1000$ bootstrap repetitions, simulating a million cases was no longer feasible and we reduced the number of test cases to $N=10^5$. For all models with continuous $Y$, it turned out, however, that the bootstrap confidence intervals did not even include the point estimate $\xi_n$, which lead to coverage probabilities close to zero. This even occurred for {\em independent} $X$ and $Y$, as can be seen in Fig.~\ref{fig:bootstrap:continuous}. A possible explanation for this effect is that the bootstrap samples are not representative for the distribution of $Y$ because repeated drawing with replacement results in a discrete distribution with many multiple values, whereas in the original distribution of $Y$ a value almost never occurs multiple times. This means that the denominator in Eq.~(\ref{eq:xin}) is no longer $n(n^2-1)/3$ as it is for continuous $Y$. This explanation is in agreement with the absence of this effect for discretely distributed $Y$ (see Fig.~\ref{fig:bootstrap:discrete}), although the coverage probability of the n-out-of-n bootstrap was unsatisfactory even in the discrete cases: For models 5-7 with $\sigma=0.1$ and sample size $n=2000$, the coverage probabilities of the usually best performing BCa bootstrap interval \cite{diciccio96,davison97} were only about 0.65 even for the normalized estimator $\xi_n'$, which was considerably less than the nominal value 0.9.

We therefore resorted to non-parametric intervals based on the m-out-of-n bootstrap, which has been shown to work in cases where the usual bootstrap fails \cite{bickel97}. For our problem, it is particularly appropriate because it allows for drawing {\em without replacement} and thus does not misrepresent a continuous distribution as a discrete  distribution, as it happens by repeated drawing with replacement in the n-out-of-n bootstrap. Politis \& Romano \cite{politis94} have shown that the m-out-of-n bootstrap requires knowledge of an appropriate scaling factor $\tau_n$. With the aid of this factor, the quantiles $q(1-\alpha/2)$ and $q(\alpha/2)$ of the scaled bootstrap distribution $\tau_m(\xi_m^* - \xi_n)$ are computed, where $\xi_m^*$ denotes the bootstrap samples obtained by m-fold drawing without replacement, and the confidence interval is estimated as
\begin{equation}
  \label{eq:ci-m-out-of-n}
  \left[ \xi_n - \frac{q(1-\alpha/2)}{\tau_n},\; \xi_n - \frac{q(\alpha/2)}{\tau_n} \right]
\end{equation}

In the subsequent article \cite{bertail99}, Bertail, Politis \& Romano explained that $\tau_n$ must be chosen such that $\tau_n^2\operatorname{Var}(\xi_n)$ converges to some constant $V$. For the special case $\xi=0$, Chatterjee already proved this convergence for $\tau_n^2=n$ \cite{chatterjee21}. In the simulations of our models, we observed the same relationship $\operatorname{Var}(\xi_n)\sim n^{-1}$, as can be seen in the examples in Fig.~\ref{fig:var-n}. The results for the continuous models 1-3 are in agreement with Lin \& Han's \cite{lin22} proof of root-n consistency of $\xi_n$ in the continuous case. Similar results were obtained for $\xi_n'$. We therefore set the scaling factor to $\tau_n=\sqrt{n}$.

\begin{figure}[b!]
  \centering
  \includegraphics[width=1.0\columnwidth]{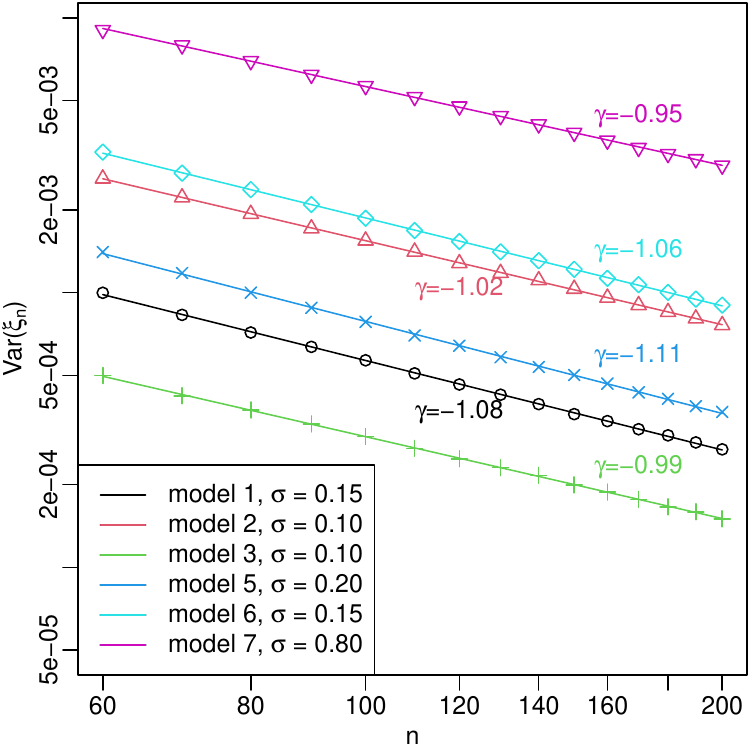}
  \caption{\label{fig:var-n} Logarithmic plot of the variation of $\operatorname{Var}(\xi_n)$ with $n$. The lines have been fitted with the model $\log \operatorname{Var}(\xi_n) = \log V + \gamma \log n$.}
\end{figure}

\begin{figure*}
  \centering
  \subfigure[model 1: $Y\sim X + \varepsilon$ (continuous)]{
      \includegraphics[width=0.95\columnwidth]{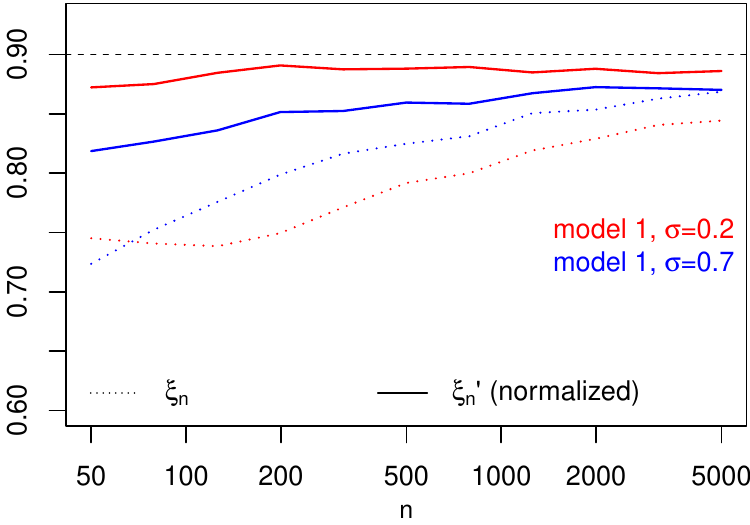}
      \label{fig:Pcov:model1}
  }
  \subfigure[model 5: $Y\sim X + \varepsilon$ (discrete)\newline
    \hspace*{1.6em}model 3: $Y\sim \sin(2\pi X) + \varepsilon$ (continuous)]{
      \includegraphics[width=0.95\columnwidth]{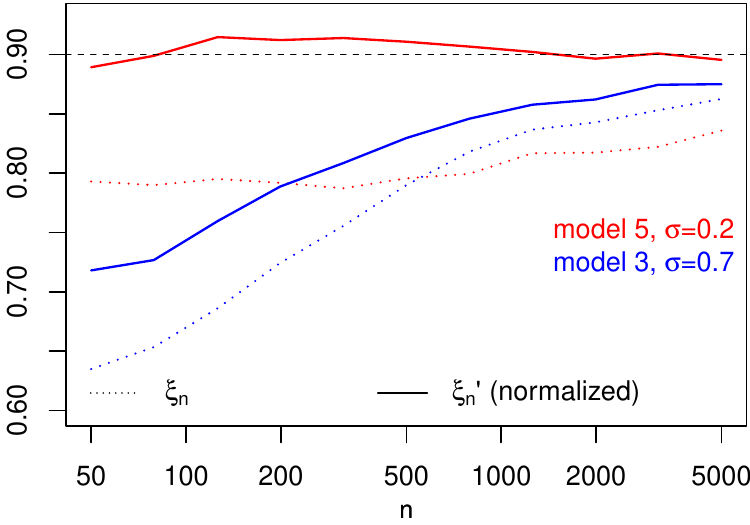}
      \label{fig:Pcov:model35}
  }
  \caption{\label{fig:Pcov} Coverage probability of 90\% m-out-of-n bootstrap confidence intervals of the raw $\xi_n$ and the normalized $\xi_n'$.}
\end{figure*}

Another inconvenience of the m-out-of-n bootstrap is that it has a parameter $m$ which has to be chosen. For asymptotic convergence of the bootstrap distribution to the true distribution, Politis \& Romano \cite{politis94} proved the sufficient conditions $m\to\infty$ and $m/n\to 0$ for $n\to\infty$, but this still leaves a wide range of options, e.g.~$m=c\cdot n^{\alpha}$ with $0<\alpha<1$. Moreover, they presented examples showing that the convergence rate of the coverage probability to the nominal value depends on the choice for $m(n)$.

\lstset{language=R,
  basicstyle=\small \ttfamily,
  literate={.help}{.help}5,
  keywordstyle=\ttfamily,
  frame=bottomline,
  floatplacement=!t,
  aboveskip=0pt,
  belowskip=0pt,
  captionpos=b
}
\begin{lstlisting}[float, caption={R implementation for computing the non-parametric m-out-of-n bootstrap confidence interval for the normalized estimator $\xi_n'$.}, label=lst:ci]
# wrapper around xicor() for bootstrap
xiboot <- function(indices, data) {
  xicor(data[indices,"x"],
        data[indices,"y"]) /
        xicor(data[indices,"y"],
              data[indices,"y"])
}

# confidence interval after
# Politis & Romano (1994)
m.out.of.n.ci <- function(
     data, conf=0.90, R=1000)
{
  # point estimate
  xi.n <- xiboot(1:n, data)

  # bootstrap distribution
  m <- round(2*sqrt(n))
  indices <- replicate(R,
      sample(1:n, size=m, replace=F))
  xi.star <- apply(indices,
      MAR=2, xiboot, data=data)

  # confidence interval
  tau <- sqrt
  xq <- quantile(tau(m)*(xi.star-xi.n),
           c((1+conf)/2, (1-conf)/2))
  return(c(xi.n - xq[1] / tau(n),
           xi.n - xq[2] / tau(n)))
}
\end{lstlisting}

To circumvent this problem, different procedures for a data driven choice of $m$ have been suggested in the literature \cite{goetze01,chung01,bickel08}. We have tried the method by G{\"o}tze and Ra{\v{c}}kauskas \cite{goetze01}, which minimizes the distance between the empirical cumulative distribution functions of the (scaled) statistic under consideration for the two m-out-of-n bootstraps with $m$ and $m/2$. As a distance measure, we have tried both the Kolmogorov distance  and the $L_2$ distance, and we implemented a Golden Section Search \cite{press92} for efficiently finding the minimum. For all models, however, the resulting coverage probability was less than 0.8 over the range $50\leq n \leq 5000$.

\begin{figure*}
  \centering
  \subfigure[model 1: $Y\sim X + \varepsilon$ (continuous)\newline
  \hspace*{1.6em}model 3: $Y\sim X^2 + \varepsilon$ (continuous)]{
      \includegraphics[width=0.95\columnwidth]{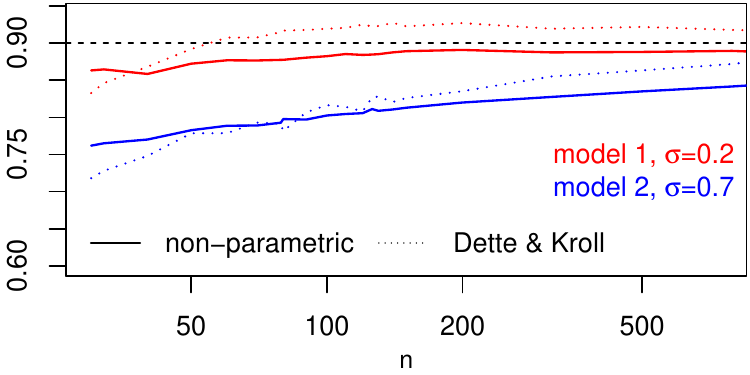}
      \label{fig:bias:model12}
  }
  \subfigure[model 5: $Y\sim X + \varepsilon$ (discrete)\newline
    \hspace*{1.6em}model 7: $Y\sim \sin(2\pi X) + \varepsilon$ (discrete)]{
      \includegraphics[width=0.95\columnwidth]{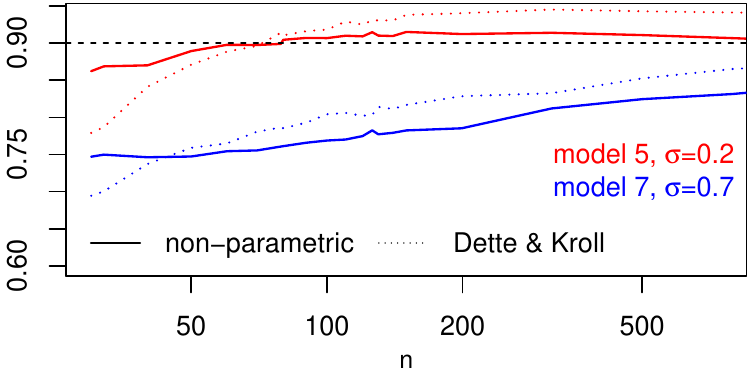}
      \label{fig:dk:model57}
  }
  \subfigure[model 4: $Y\sim XY$ (discrete)]{
      \includegraphics[width=0.95\columnwidth]{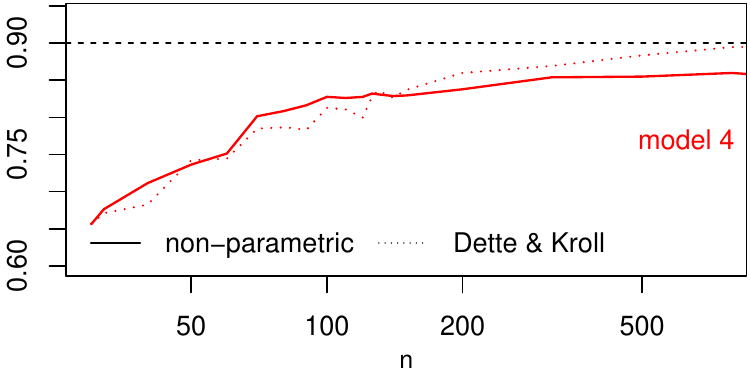}
      \label{fig:dk:model4}
  }
  \subfigure[model 9: $Y,X$ uniform and independent (discrete)]{
      \includegraphics[width=0.95\columnwidth]{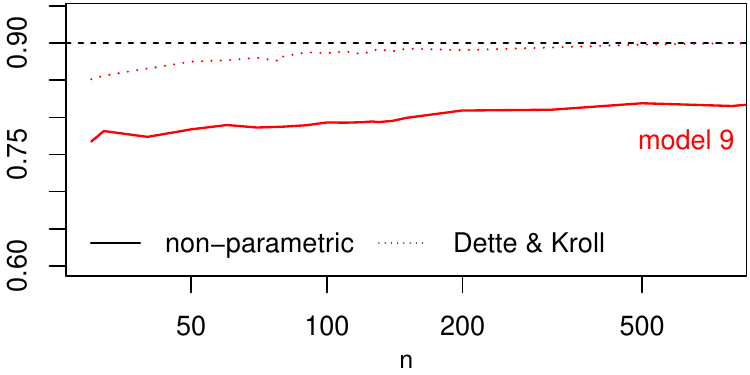}
      \label{fig:dk:model9}
  }
  \caption{\label{fig:dk} Comparison of the coverage probability of the 90\% non-parametric confidence interval and the $\pm z_{1-\alpha/2}\hat{\sigma}_{\xi_n}$ interval after Dette \& Kroll \cite{dette23}.}
\end{figure*}

We therefore tried different choices of the form $m=cn^\alpha$ with $c\in\{1,2,3,4\}$ and $\alpha\in\{1/2,2/3,3/4\}$. Among these combinations, the highest coverage probabilities for $500\leq n\leq 5000$ were achieved with $c=2$ and $\alpha=1/2$. We therefore made the choice $m=2\sqrt{n}$, and the code for computing the confidence intervals is given in Listing \ref{lst:ci}. As can be seen in Fig.~\ref{fig:Pcov}, both the resulting coverage probability and the speed of the convergence to the nominal value was different for the different models. In all cases, however, it approached the nominal value for large $n$, albeit rather slow for some models, e.g. for model 3 with $\sigma=0.7$. In all cases, the coverage probability was higher for the normalized estimator $\xi_n'$ than for the original estimator $\xi_n$. Moreover, for higher values of $\xi$, which correspond to smaller values for $\sigma$ in Fig.~\ref{fig:Pcov}, the difference of the coverage probability was greater than for smaller values of $\xi$ (higher $\sigma$), and the confidence intervals for the normalized $\xi_n'$ approached the nominal value 90\% for much smaller $n$ compared to the use of the unnormalized $\xi_n$.

To compare the non-parametric confidence intervals with the asymptotic normal $\pm z_{1-\alpha/2}\hat{\sigma}_{\xi_n}$ intervals, we have also estimated $\hat{\sigma}_{\xi_n}$ with an m-out-of-n bootstrap with $m=\sqrt{n}$ as suggested by Dette \& Kroll \cite{dette23}. The convergence of the coverage probability of the asymptotic normal interval to the nominal value $1-\alpha=0.9$ for $n\to\infty$ turned out to be faster in all of our simulated cases, as is exemplified in Fig.~\ref{fig:dk}. In some models with high $\xi$ (coresponding to small $\sigma$ in our models 1\&5), the coverage probability of the normal based intervals even exceeded the nominal value and approached it from above as $n\to\infty$. For small $n$, there is typically a greater deviation of the bootstrap distribution from normality, so that the non-parametric intervals showed a higher coverage probability for many models in the range $n\lesssim 50$, as can be seen in Figs.~\ref{fig:bias:model12} \& \ref{fig:dk:model57}. This was not universal, though: for model 4, the coverage probabilities were similar for both methods (see Fig.~\ref{fig:dk:model4}), and, for models 9\&10, the normal interval had higher coverage probability throughout (see Fig.~\ref{fig:dk:model9}).

\section{Conclusions}
\label{sec:conclusions}
Our simulations show that scaling Chatterjee's rank correlation coefficient $\xi_n$ by its upper bound both reduces its bias and improves the coverage probability of a non-parametric m-out-of-n bootstrap confidence interval. We have observed that the usual n-out-of-n bootstrap does not work, and our utilization of the m-out-of-n bootstrap without replacement, also known as ``n choose m bootstrap'', is a viable approach to obtain asymptotically valid confidence intervals. For some of our simulated models, however, a nearly nominal coverage probability was achieved only for fairly high sample sizes $n\geq 1000$. The method by Dette \& Kroll \cite{dette23} based on a normality assumption showed an earlier convergence to the nominal coverage probability and is thus recommended, unless the sample size is very small, e.g. about 50 and below, in which case the non-pramateric intervals had higher coverage probability in most cases.

\section*{Acknowledgements}
\label{sec:acknowledgements}
We thank the anonymous reviewers for their valuable comments. The first version of this article has been revised after we have learnt about the preprint of \cite{dette23} and the references \cite{lin22,lin23} given therein. We thank Felix L\"{o}gler for doing some of the additional simulations that thereby became necessary to compare the different methods for constructing a confidence interval.

\section*{Appendix: Computation of $\xi$}
\label{appendix}
For estimating the bias of $\xi_n$ in the different models, it is necessary to compute the asymptotic value $\xi$. As $\xi_n$ is known to be asymptotically unbiased, the asymptotic value can be approximated by a Monte Carlo simulation with a large sample size. Alternatively, it can be done with Eq.~(\ref{eq:xicondprob}) by means of symbolic integration, numeric integration, or a combination of both. In this appendix, the latter approach is chosen.

\paragraph{Model 1:} $\boldsymbol{Y=X+\varepsilon}$ \\
with $X\sim \mbox{unif}(a,b)$ and $\varepsilon\sim \mbox{norm}(0,\sigma^2)$.

In this case, the probability density of $X$ is $\mbox{dunif}(a,b)(x)= 1/(b-a)$ for $x\in[a,b]$, and the probability density of $\varepsilon$ is $\mbox{dnorm}(0,\sigma^2)(x)= \varphi(z/\sigma)/\sigma$, where $\varphi(z)=e^{-z^2/2}/\sqrt{2\pi}$ is the density of the standard normal distribution.

As $Y$ is the sum of two independent variables, its probability density $f_Y(y)$ is the convolution of the respective densities, i.e.,
\begin{align}
  f_Y(y) & = \mbox{dunif(a,b)}*\mbox{dnorm}(0,\sigma^2) (y) \nonumber \\
  & = \frac{1}{b-a}\left( \Phi\left(\frac{y-a}{\sigma}\right)
  - \Phi\left(\frac{y-b}{\sigma}\right) \right)
\end{align}
where $\Phi$ is the cumulative distribution function of the standard normal distribution. Moreover,
\begin{align}
  P(Y\geq t|X\! =\! x) & = \int\limits_t^\infty \frac{1}{\sigma}\,\varphi\left(\frac{y-x}{\sigma}\right) dy \nonumber \\
  & = \Phi\left(\frac{x-t}{\sigma}\right)
\end{align}
The inner integral in Theorem \ref{theorem:xicondprob.cont} with respect to $d\lambda(x)=\mbox{dunif}(a,b)(x)\,dx$ can also be symbolically integrated to \cite{owen80}
\begin{multline}
  \frac{1}{b-a}\int\limits_a^b P(Y\!\geq\! t|X\!=\!x)^2 dx 
  = \frac{\sigma}{b-a} \bigg[ z\Phi(z)^2 \\
    + 2\Phi(z)\varphi(z)
   -\frac{1}{\sqrt{\pi}}\Phi(z\sqrt{2})\bigg]_{\frac{a-t}{\sigma}}^{\frac{b-t}{\sigma}}
\end{multline}
Eventually, we have evaluated the remaining integral in Theorem \ref{theorem:xicondprob.cont} with respect to $d\mu(t)=f_Y(t)dt$ numerically with the R function {\em integrate()}. We used this mostly symbolic solution as a ground truth for testing the solely numeric integration that was necessary for the other models.

\paragraph{Models 2 \& 3:} $\boldsymbol{Y=f(X)+\varepsilon}$ \\
with $X\sim \mbox{unif}(a,b)$ and $\varepsilon\sim \mbox{norm}(0,\sigma^2)$.

Both models can be written in this from with $f(x)=x^2$ for model 2, and $f(x)=\sin(x)$ for model 3. In this more general situation, it is
\begin{align}
  P(Y\!\geq\! t|X\! =\! x) & = \int\limits_t^\infty \frac{1}{\sigma}\,\varphi\left(\frac{y-f(x)}{\sigma}\right) dy \nonumber \\
  & = \Phi\left(\frac{f(x)-t}{\sigma}\right)
\end{align}
and the other integrals cannot be symbolically solved. We therefore computed all other integrals with the function from Listing \ref{lst:integratexicor}. It should be noted that the convolution integral {\em py} in Listing \ref{lst:integratexicor} might be computed more efficiently with an FFT based approach \cite{fftconv14}, but the computation by means of {\em integrate()} was fast enough for our purpose.

\paragraph{Model 4:} $\boldsymbol{Y=XZ}$ \\
with $X\sim \mbox{bernoulli}(p)$ and $Z\sim \mbox{bernoulli}(p')$.

This is the model for which Chatterjee already gave a formula for $\xi$ \cite{chatterjee21}. $Y$ can only take the two values $0$ and $1$ with probabilities
\begin{equation}
  P(Y=1)=pp' \quad\mbox{and}\quad P(Y=0)=1-pp'
\end{equation}
and we have
\begin{align}
  P(Y\geq 0|X\! =\! x) & = 1  && \quad\mbox{for both }x \\
  P(Y\geq 1|X\! =\! x) & =  p' && \quad\mbox{for }x=1 \nonumber \\
  P(Y\geq 1|X\! =\! x) & = 0  && \quad\mbox{for }x=0 \nonumber
\end{align}
Inserting everything into Eq.~(\ref{eq:xicondprob}) yields
\begin{equation}
  \xi = \frac{(1-p)p'}{1-pp'}
\end{equation}

\lstset{language=R,
  basicstyle=\small \ttfamily,
  literate={.help}{.help}5,
  keywordstyle=\ttfamily,
  frame=bottomline,
  floatplacement=!t,
  aboveskip=0pt,
  belowskip=0pt,
  captionpos=b
}
\begin{lstlisting}[float, caption=R implementation of the numeric integration to compute the value $\xi$ according to Theorem \ref{theorem:xicondprob.cont}., label=lst:integratexicor]
# numeric integration of xi
# for models f(X) + eps
#   px   = density function of x
#   peps = density function of eps
#   PYx  = P(Y>=t|x)
xi.cont.fx <- function(
        f, px, peps, PYx,
        minx=-Inf, maxx=Inf,
        miny=-Inf, maxy=Inf) {
  
  # convolution integral
  py <- function(y) {
    ff <- function(x) {
      px(x)*peps(y-f(x))
    }
    integrate(ff, minx, maxx)$value
  }

  # integral w.r.t. dlambda(x)
  inner.integral <- function(t) {
    ff <- function(x) {
      PYx(t,x)^2 * px(x)
    }
    integrate(ff, minx, maxx)$value
  }

  # integral w.r.t dmu(t)
  ff <- Vectorize( function(t) {
    py(t) * inner.integral(t)
  } )
  6*integrate(ff, miny, maxy)$value - 2
}
\end{lstlisting}

\paragraph{Model 5 \& 6:} $\boldsymbol{Y=f(X)+\varepsilon}$ \\
with $X\sim \mbox{equal}(a,b,n)$ and\\ $\varepsilon\sim -\sigma\sqrt{m} + \frac{2\sigma}{\sqrt{m}}\mbox{binom}(m,0.5)$.

For discrete distributions, all integrals in Eq.~(\ref{eq:xicondprob}) actually are finite sums which can be readily evaluated by the computer. For computing the distribution of $Y$, i.e. the convolution between the distributions of $f(X)$ and $\varepsilon$, we have used the function {\em conv()} from the R package {\em kSamples} \cite{kSamples}.

\onecolumn
\bibliographystyle{ieeetr}
\bibliography{xicor-bias}

\end{document}